\documentclass{llncs}

\usepackage[utf8]{inputenc}
\usepackage[english]{babel}

\usepackage{cite}
\usepackage{dsfont}
\usepackage{amssymb,amsmath,amsfonts,hyperref}
\usepackage[capitalise]{cleveref}
\usepackage{complexity}
\newclass{\TWOEXP}{2\!-\!EXPTIME}
\newclass{\THREEEXP}{3\!-\!EXPTIME}
\newclass{\EXPTIME}{EXPTIME}
\usepackage{tikz}
\usetikzlibrary {arrows.meta}
\usetikzlibrary{arrows,trees,backgrounds,automata,shapes,plotmarks,decorations}
\tikzstyle{block}=[rectangle,draw, thin, inner sep=3pt, text centered,fill=orange!20!yellow!20] 
\tikzstyle{net}=[draw,cloud,fill=yellow!20,aspect=3,inner sep=1pt]
\tikzstyle{dev}=[draw,circle,fill=yellow!20,aspect=2,inner sep=1pt,minimum size=.6cm]
\tikzstyle{pre}=[<-,shorten <=1pt,>=stealth']
\tikzstyle{post}=[->,shorten >=1pt,>=stealth']
\tikzstyle{bi}=[<->,shorten >=1pt,shorten <=1pt, >=stealth']
\tikzstyle{every initial by arrow}=[initial text={},initial distance=1em,post]
\tikzstyle{every state}=[minimum size=0.6cm,fill=cyan!20!yellow!20]
\tikzstyle{transition}= [post,shorten >=1pt,node distance=2cm, inner sep=2pt,bend angle=20]

\spnewtheorem{con}{Construction}{\bfseries}{\itshape}

\Crefname{theorem}{Thm.}{Thms.}
\Crefname{con}{Constr.}{Constrs.}
\Crefname{proposition}{Prop.}{Props.}
\Crefname{lemma}{Lem.}{Lemmas}
\Crefname{definition}{Def.}{Defs.}
\Crefname{corollary}{Cor.}{Cors.}
\Crefname{figure}{Fig.}{Figs.}
\Crefname{section}{Sect.}{Sections}
\Crefname{equation}{Eq.}{Eqs.}

\newcommand{\real}{\mathds{R}}
\newcommand{\rat}{\mathds{Q}}
\newcommand{\nat}{\mathds{N}}

\newcommand{\NoZ}{\mathop{\nu}}
\newcommand{\aut}{\mathcal{A}}

\newcommand{\gra}{\mathcal{G}}
\newcommand{\ent}{\mathcal{H}}
\newcommand{\capa}{\mathcal{C}}

\newcommand{\bandh}{\mathcal{BH}}
\newcommand{\bandc}{\mathcal{BC}}
\newcommand{\trans}[1]{\xrightarrow{#1}}

\newcommand{\face}{\mathfrak{f}}

\newcommand{\guard}{\mathfrak{g}}
\newcommand{\reset}{\mathfrak{r}}
\newcommand{\card}[1]{\#\!#1}

\newcommand{\0}{\mathbf{0}}

\newcommand{\dr}{\overrightarrow{d}}

\newcommand{\Path}{\mathop{path}}

\newcommand{\Word}{\mathop{word}}

\newcommand{\length}{\mathop{\mathrm{dur}}}

\newfunc{\myexp}{exp}
\newlang{\myreach}{Reach}
\newlang{\Obese}{Obese}
\newlang{\Meager}{Meager}
\newlang{\Normal}{Normal}

\newcommand{\aaa}{node[blue]{$\bullet$}}
\newcommand{\bbb}{node[red]{$\circ$}}


\newcommand{\DTA}{\textsc{DTA}}
\newcommand{\STA}{\textsc{STG}}
\newcommand{\RTA}{\textsc{RsTA}}

\title{Computing the Bandwidth of \\ 
Meager Timed Automata\thanks{This work was supported by the ANR project MAVeriQ ANR-CE25-0012 and by the ANR-JST project CyPhAI.}}

\author{Eugene Asarin\inst{1}\orcidID{0000-0001-7983-2202}
\and Aldric Degorre\inst{1}\orcidID{0000-0003-2712-4954}
\and C\u at\u alin Dima\inst{2}\orcidID{0000-0001-5981-4533}
\and Bernardo Jacobo Inclán\inst{1}\orcidID{0009-0009-5323-7945}}

\institute{Université Paris Cité, CNRS, IRIF, Paris, France
\and LACL, Université Paris-Est Créteil, France}

\authorrunning{Asarin, Degorre, Dima, and Jacobo Inclán}

\begin{document}
\maketitle

\begin{abstract}
The bandwidth of timed automata characterizes the quantity of information produced/transmitted per time unit. We previously delimited 3 classes of TA according to the nature of their asymptotic bandwidth: meager, normal, and obese.
In this paper, we propose a method, based on a finite-state simply-timed abstraction, to compute the actual value of the bandwidth of meager automata.
The states of this abstraction correspond to barycenters of the faces of the simplices in the region automaton. 
Then the bandwidth is  $\log 1/|z_0|$ where $z_0$ is the smallest root (in modulus) of the characteristic polynomial of this finite-state abstraction. 

\keywords{Timed automata \and Information 
\and Bandwidth \and Entropy}
\end{abstract}
\input{intro}
\section{Preliminaries}\label{sec:prelims}
\subsection{Timed words, languages and automata}
Timed automata have been introduced in \cite{AD} for modeling and verification of real-time systems.  We will use the main definitions in the following form:
\begin{definition}
    Given $\Sigma$, a finite alphabet of discrete events, a \emph{timed word} over $\Sigma$ is an element from $\left(\Sigma\times\real_+\right)^*$ of the form $w = (a_1,t_1)\dots (a_n,t_n)$, 
with $0 \leq t_1 \leq t_2 \cdots \leq t_n$. We denote $\length(w) = t_n$.
A \emph{timed language} over $\Sigma$ is a set of timed words over the same alphabet.
\end{definition}
For a set of variables $\Xi$, let $G_\Xi$ be the set of finite conjunctions of constraints of the form $\xi \sim b$ with $\xi \in \Xi$, $\sim \in \{<,\leq, >, \geq\}$ and $b \in \nat$.
\begin{definition}
 A  \emph{timed automaton} is a tuple 
$(Q, X, \Sigma, \Delta, S, I, F)$, where
\begin{itemize}
    \item $Q$ is the finite set of discrete locations;
    \item $X$ is the finite set of clocks;
    \item $\Sigma$ is a finite alphabet;
    \item $S,I,F: Q \rightarrow G_{X}$  define respectively the starting, initial, and final clock values for each location;
    \item 
    $\Delta \subseteq Q \times Q \times \Sigma \times G_{X} \times 2^X$ is the transition relation, whose elements are called edges.
\end{itemize}
A timed automaton is \emph{deterministic} (referred to as \DTA) if $\{ (q,x) | x\models I(q)\}$ is a singleton and for any two edges 
$(q,q_1,a,\guard_1,\reset_1)$ and $(q,q_2,a,\guard_2,\reset_2)$ with $q_1 \neq q_2$, the constraint $\guard_1 \wedge \guard_2$ is   non-satisfiable. 
\end{definition}
A small particularity of our definition is the starting clock constraint $S(q)$ defining with which clock values the run can enter each state $q$.

\begin{figure}[t]
\begin{tikzpicture}
%
\node at (0,6) {$\aut_1:$};
\node[state, initial, accepting] at (1,6) (q1) {$q$};
\draw (q1) edge[loop below] node{$a,2<x<3,\{x\}$} (q1);
\node at (2.5,6) {$\aut_2:$};
\node[state, initial, accepting] at (3.5,6) (q1) {$q$};
\draw (q1) edge[loop below] node{$a,b,\,x<1$} (q1);
\draw (q1) edge[loop above] node[right]{$c,5<x<6,\{x\}$} (q1);
\node at (5.5,6) {$\aut_3:$};
\node[state, initial, accepting] at (6.5,6) (q1) {$q$};
\draw (q1) edge[loop below] node{$a,b,\,x<5$} (q1);
\node at (8,6) {$\aut_4:$};
\node[state, initial] at (9,6) (q1) {$q$};
\node[state, accepting] at (11,6) (q2) {$p$};
\draw (q1) edge[->,above, bend left] node{$a,b,\,3 <x<4$} (q2)
(q2) edge[->,below, bend left] node{$\delta_2:b, 5<y<6, \{x,y\}$}(q1);
%
\node at (0,3.5) {$\aut_5:$};
\node[state, initial, accepting] at (1,3.5) (q1) {$q$};
\node[state, accepting] at (3,3.5) (q2) {$p$};
\draw (q1) edge[->,above, bend left] node{$a,b,\,x=3$} (q2)
(q2) edge[->,below, bend left] node{$b,\, y=5, \{x,y\}$} (q1);
\node at (4,3.5) {$\aut_6:$};
\node[state, initial, accepting] at (5,3.5) (q1) {$q$};
\node[state] at (7,3.5) (q2) {$p$};
\draw (q1) edge[->, above, bend left] node{$a,b, x<1, \{x\}$} (q2)
(q2) edge[->, below, bend left] node{$a, b, 1<y<2, \{y\}$} (q1);
\node at (8,3.5) {$\aut_7:$};
\node[state, initial, accepting] at (9,3.5) (q1) {$q$};
\node[state, accepting] at (11,3.5) (q2) {$p$};
\draw (q1) edge[->, above, bend left] node{$a, x<1, \{x\}$} (q2)
(q2) edge[->, below, bend left] node{$b, y<1, \{y\}$} (q1);
%
\end{tikzpicture}

\caption{7 running examples of timed automata, $I(q)=\{\0\}$, $F(\cdot)=\mathbf{true}$ for marked locations. An arrow labeled with $a,b$ is a shorthand for two transitions with the same guards and resets.} \label{fig:5examples}
\end{figure}
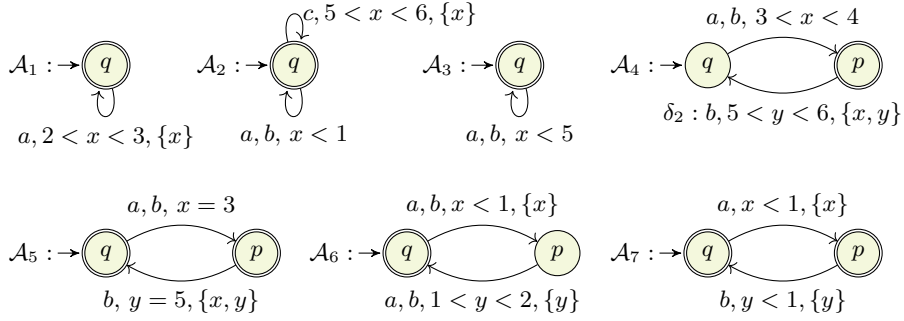


The semantics of a timed automaton is given by a \emph{timed transition system}  whose states are 
tuples $(q,x)$ composed of a location $q\in Q$ and a clock valuation (vector) $x \in [0,\infty)^X$.
Each edge $\delta  = (q,q', a,\guard, \reset) \in \Delta$ generates many timed transitions, which are tuples $(q,x) \trans{\delta,d} (q', x') $ where 
$ x \models S(q)$, $x+d \models \guard$ and $x'_c=0$ whenever $c\in\reset$ and $x'_c=x_c+d$ otherwise, provided that $x' \models S(q')$. We also denote $x[\reset]$ the operation of resetting 
the clocks in $\reset$, hence $x' = (x+d)[\reset]$. 

\emph{Paths} are sequences of edges that agree on intermediary locations. 
At the semantic level, they generate \emph{runs} of the form $$\rho = (q_0,x_0) \trans{(\delta_1,d_1)} (q_1,x_1)\dots \trans{(\delta_n, d_n)}(q_n,x_n),$$ which are sequences of timed transitions that agree on intermediary configurations.
An \emph{accepting run} is a run
in which the first state satisifes $x_0 \models I(q_0)$ and the last one $x_n \models F(q_n)$.

Given a run $\rho$ as above, the path associated with $\rho$ is defined as 
$\Path(\rho)\triangleq\delta_1\dots \delta_n$. 
Furthermore, if $\delta_i = (q_i,q_{i+1},a_i,\guard_i,\reset_i)$, then 
the timed word associated with $\rho$ is defined as $\Word(\rho)\triangleq 
(a_1,d_1)(a_2,d_1+d_2)\ldots  (a_n,\sum_{i\leq n}d_i)$. 
The \emph{duration} of $\rho$ is $\length(\rho) = \length(\Word(\rho))$.
The language of a timed automaton is the set of timed words associated with some accepting run and is denoted $L(\aut)$.
We will also need \emph{time-restricted languages}: for any timed language $L$ and nonnegative real $T \in [0,\infty)$,
$L_T = \{ w \in L \mid \length(w)  \leq T\}$.

For a small example consider a 3-edge path $q\to p\to q$ in the timed automaton $\aut_6$ in \cref{fig:5examples}. One of its runs is 
$$(q,0,0) \trans{(a,0.8)} (p,0,0.8) \trans{(b, 0.7)}(q,0.7,0)\trans{(b,0.2)} (p,0,0.2)$$
and the corresponding timed word is $(a,0.8) (b,1.5) (b,1.7)$.

\paragraph{Region-split form.}
Many algorithms related to timed automata utilize the finitary abstraction of the timed transition system called the \emph{region construction} \cite{AD} that we briefly recall here.
Let us denote by $M$ the maximal constant appearing in the constraints used in the automaton. 
Also, for $\xi \in \real$, we denote $\lfloor \xi \rfloor$ its integral part, 
$\{\xi\}$  its fractional part and, for a clock vector $x\in [0,\infty)^X$, $x_\text{int}\triangleq\{c\in X|x_c\in[0,M]\cap\nat\}$, and $x_\text{frac}\triangleq\{c\in X| x_c \in [0,M]\setminus\nat\}$.
Then regions are defined as equivalence classes of the region equivalence described hereafter. Two clock vectors $x$ and $y$ are region-equivalent iff~:
%
\begin{itemize}
    \item $x_\text{int}=y_\text{int}$ and $x_\text{frac} = y_\text{frac}$;
    \item and for any clock $ c\in x_\text{int}\cup x_\text{frac}, \lfloor x_c \rfloor = \lfloor y_c \rfloor$;
    \item and for any two clocks  $c_1, c_2\in x_\text{int}\cup x_\text{frac}$, $\{x_{c_1}\}\leq \{x_{c_2}\}$ iff $\{y_{c_1}\} \leq \{y_{c_2}\}$.
\end{itemize}
Note that a region $R$ is a simplex of some dimension $d\leq \#X$. 


As in \cite{entroJourn,3classes} we utilize here timed automata in which the guards defining starting clock values for each location actually define regions:
\begin{definition}
A \emph{region-split 
TA} (or \RTA\ for short) is a deterministic timed automaton $(Q,X,\Sigma,\Delta,S,I,F)$, such that, for any location $q\in Q$: 
\begin{itemize}
\item $S(q)$ defines a non-empty bounded region, called the \emph{starting region} of $q$;
\item all states in $\{q\}\times S(q)$ are reachable from the initial state and co-reachable to a final state;
\item either $I(q)=S(q)$ is a singleton\footnote{by definition of DTA this is possible for a unique location $q$} or $I(q)=\emptyset$;
\item for any edge $(q,q',a,\guard,\reset)\in\Delta$,  
$\left(\{S(q)+d\mid d\in \real_+\}\cap \guard\right)[\reset]=S(q')$, where we utilized the $(\cdot) [\reset]$ operator lifted to sets of clock valuations.
\end{itemize}
\end{definition}

A folklore result says that any \DTA\ can be brought into a region-split form which may contain an exponentially larger set of locations.
Note that this construction is similar to that of the region automaton introduced in \cite{AD}, except that the outcome is slightly coarser and is typed as a timed automaton. As an example, we present on \cref{fig:meager-example} the region-split form $\aut_6'$ of the automaton $\aut_6$ (from \cref{fig:5examples}) with explicit starting regions $S(\cdot)$  for each state. In particular, state $q$ of $\aut_6$ is split into two states: the initial $s$ and $q$ corresponding to zero and non-zero clock vectors, respectively. 


We also introduce the notation $\bar\aut$ for the closed version of any \RTA\ $\aut$: i.e. the same automaton where all strict inequalities are changed to non-strict ones. Thanks to region-splitting, the closure retains the same set of symbolic behaviors as the original \RTA, this allows us to safely reason on runs that visit vertices and faces of regions.

\subsection{Pseudo-distance on timed words}
We are using the proximity measure on timed words introduced in \cite{distance}.
\begin{definition}
The \emph{pseudo-distance} $d$ between timed words 
$$
w = (a_1,t_1)\dots(a_n,t_n) \text { and }v = (b_1,s_1)\dots(b_m,s_m)$$ 
is defined as follows (with the convention $\min \emptyset = \infty$):
$$
 \dr(w,v)\triangleq \max_{i\in\{1..n\}}\min_{j\in\{1..m\}} \{ |t_i-s_j|:a_i=b_j \};\hfill
 d(w,v) \triangleq  \max( \dr(w,v) , \dr(v,w) ).
$$
\end{definition}
This pseudo-distance allows a meaningful comparison of timed words with a different number of events. It is illustrated on \cref{fig:dist}.
Intuitively, two words are close to each other when they cannot be distinguished by an observer that reads the discrete letters of the word exactly 
(they can determine whether or not a letter has occurred) but with some imprecision w.r.t.~time. 
So the observer cannot determine when two letters are very close to one another, which one came before the other, and not even how many times a letter was repeated within a short interval. 
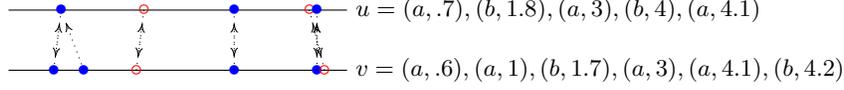
\begin{figure}[t]
\begin{center}
{
\usetikzlibrary {arrows.meta,positioning} 
\begin{tikzpicture}
\draw (0,.8) -- (.7,.8) \aaa  --(1.8,.8) \bbb  -- (3,.8) \aaa-- (4,.8) \bbb-- (4.1,.8) \aaa --
(4.5,.8)node[anchor=west]{$u=(a,.7),(b,1.8), (a,3), (b,4), (a,4.1)$};

\draw (0,0) -- (.6,0) \aaa   -- (1,0) \aaa--(1.7,0) \bbb  -- (3,0) \aaa-- (4.1,0) \aaa-- (4.2,0)\bbb --
(4.5,0)node[anchor=west]{$v=(a,.6),(a,1), (b,1.7), (a,3), (a,4.1),(b,4.2)$};

\draw [dotted, arrows={
->[width=1mm,length=1.2mm,sep=1.8mm]}]
(.6,0) edge (.7,.8)
(.7,.8) edge (.6,0)
(1,0) edge (.7,.8)
(1.7,0) edge (1.8,.8)
(1.8,.8) edge (1.7,0)
(3,0) edge (3,.8)
(4,.8) edge (4.2,0)
(3,.8) edge (3,0)
(4.2,0) edge (4,.8)
(4.1,0) edge (4.1,.8)
(4.1,.8) edge (4.1,0)
;
\end{tikzpicture}
}
\end{center}
\caption{Pseudo-distance between two timed words (dotted lines with arrows represent directed matches between letters).  	$\protect\dr(u,v)=0.2;\  \protect\dr(v,u)=0.3$, thus $ d(u,v)=0.3$.}\label{fig:dist}
\end{figure}

It is possible that $d$ fails to distinguish timed words, that is $d(w_1,w_2)=0$  but $w_1\neq w_2$; this could happen when $w_1$ and $w_2$
only differ by order and quantity of simultaneous letters, e.g.~for $w=(a,1)(b,1)$ and $v=(b,1)(b,1)(a,1)$, that is why $d$ is  only a pseudo-distance.

We will sometimes use the following trick to ``transform'' $d$ into a distance. We call a timed word $w = (a_1, t_1)\dots (a_n, t_n)$ \emph{0-free} whenever $0<t_1<t_2<\cdots <t_n$. 
Note that $d$ satisfies the axiom $w_1=w_2\Leftrightarrow d(w_1,w_2)=0$ whenever $w_1,w_2$ are 0-free, hence $d$ becomes a distance on 0-free words.
\begin{definition}[0-elimination, words]\label{def:nu} Given a timed word $w$ over an alphabet $\Sigma$, we define its \emph{$0$-elimination} as the 0-free timed word $W=\NoZ(w)$ over the alphabet $2^\Sigma\setminus\{\emptyset\}$ as follows:
\begin{itemize}
    \item let $0<t_1<t_2<\cdots<t_n$ be all the distinct non-zero event times in $w$;
    \item let $A_i\in 2^\Sigma$ be the set of all the events in $w$ occurring at time $t_i$;
    \item finally let $W=(A_1,t_1),(A_2,t_2),\dots, (A_n,t_n)$.
\end{itemize}
\end{definition}
For example $\NoZ((c,0)(a,5)(b,5)(a,5)(c,7))=(\{a,b\},5)(\{c\},7)$, the events at time $0$ are ignored.
We summarize the properties of the operation $\NoZ$:
\begin{proposition}\label{eq:nu}
Operation $\NoZ$ maps  timed words on $\Sigma$ to $0$-free ones on $2^\Sigma\setminus\{\emptyset\}$.
$\NoZ(w_1)=\NoZ(w_2)$ if and only if the words $w_i, i=1,2$ admit decompositions $w_i=u_iv_i$ with $\length(u_i)=0$ and $d(v_1,v_2)=0$.
\end{proposition}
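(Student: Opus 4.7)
The plan is to tackle the two parts of the proposition separately. The first part—that $\NoZ(w)$ is always a 0-free timed word over $2^\Sigma\setminus\{\emptyset\}$—follows immediately from \cref{def:nu}: the times $t_1,\ldots,t_n$ are chosen as the \emph{distinct} strictly positive event times of $w$ (hence strictly increasing after sorting), and each $A_i$ is non-empty as it contains at least the letter occurring at $t_i$. So this is essentially a routine bookkeeping check.

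For the second (iff) part, I would handle both directions by exploiting the same natural decomposition. For the forward direction, I would assume $\NoZ(w_1)=\NoZ(w_2)$ and let $u_i$ be the maximal prefix of $w_i$ consisting of events at time $0$, with $v_i$ the corresponding suffix (well-defined because times are non-decreasing). Then $\length(u_i)=0$ by construction, and since $\NoZ$ discards exactly the time-0 events, $\NoZ(v_i)=\NoZ(w_i)$, so $v_1$ and $v_2$ yield the same word $(A_1,t_1)\cdots(A_n,t_n)$. This means that at each positive time $t_j$ the \emph{set} of letters appearing in $v_1$ equals the one in $v_2$, though multiplicities may differ. Consequently every event $(a,t_j)$ of $v_1$ has a same-letter, same-time occurrence in $v_2$ (and symmetrically), giving $\dr(v_1,v_2)=\dr(v_2,v_1)=0$, hence $d(v_1,v_2)=0$.

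For the reverse direction, given a decomposition $w_i=u_iv_i$ with $\length(u_i)=0$ and $d(v_1,v_2)=0$, I would first observe that $\NoZ(w_i)=\NoZ(v_i)$, because $u_i$ consists solely of time-0 events, which $\NoZ$ ignores. Then from $d(v_1,v_2)=0$ it follows that every letter occurring in $v_1$ at some time $t$ also occurs in $v_2$ at the same time $t$ and conversely; in particular the set of positive event-times and the per-time set of labels coincide for $v_1$ and $v_2$, so $\NoZ(v_1)=\NoZ(v_2)$, which closes the argument.

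I do not expect any step to pose a real obstacle—the whole argument is essentially a definition chase—but the main care point is that $d=0$ is strictly weaker than equality (it is insensitive to letter multiplicities at each common time stamp), which matches precisely the coarsening performed by $\NoZ$ at positive times together with the complete forgetting at time $0$.
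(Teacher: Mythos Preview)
Your proposal is correct and is precisely the natural definition-chase the paper has in mind; the paper itself does not spell out a proof of this proposition beyond the one-line paraphrase that ``0-eliminations of two timed words coincide if and only if the words are at distance $0$ except for the initial time instant $0$'', and your argument is the straightforward unpacking of that sentence. The only cosmetic point is to make explicit that the empty prefix $u_i$ (when $w_i$ has no time-$0$ events) has duration $0$ by convention.
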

In words, 0-eliminations of two timed words coincide if and only if the words are at distance $0$ except for the initial time instant $0$.

\subsection{Bandwidth of timed languages}

The definition of bandwidth is based on Kolmogorov\&Tikhomirov's work \cite{kolmoEpsilon} 
formalizing the quantity of information in an element of a compact metric space observed with a given precision. 
We recall the adaptation of their definitions from \cite{distance} to the context of timed languages.
\begin{definition}
Let $L$ be a timed language, then:
\begin{itemize}
    \item ${\mathcal M} \subseteq L$ is an $\varepsilon$\emph{-separated subset} of $L$ if $\forall x\neq y \in{\mathcal M} , d(x,y) > \varepsilon$;
    \item ${\mathcal N}$ is an $\varepsilon$\emph{-net} of $L$  if $\forall y \in L, \exists s \in {\mathcal N} \text{ s.t. } d(y,s) \leq \varepsilon$;
    \item $\varepsilon$\emph{-capacity} of $L$ is defined\footnote{all logarithms in this article are base 2} as 
    $$
     \capa_\varepsilon(L) \triangleq  \log\max\{\card{\mathcal M}\mid  {\mathcal M}  \text{ $\varepsilon$-separated set of } L\};
    $$
    \item
    $\varepsilon$\emph{-entropy}  of $L$ is defined as
    $$
    \ent_\varepsilon(L) \triangleq  \log\min\{\card{\mathcal N}  \mid  {\mathcal N}   \text{ $\varepsilon$-net of $L$}\}.
    $$
\end{itemize}
\end{definition}
As shown in \cite{distance}, every time-bounded timed language is compact with respect to our distance $d$ and thus has finite  $\varepsilon${-capacity} and $\varepsilon${-entropy}. 
 \cite {formats2022} introduced the notion of bandwidth of a timed language as the entropy or capacity per time unit and related this notion to bounded delay codes.

\begin{definition}[bandwidth, \cite{formats2022}]The $\varepsilon$\emph{-entropic} and $\varepsilon$\emph{-capacitive bandwidths} of a timed  language  $L$ are 
\[
\bandh_\varepsilon(L) \triangleq \limsup_{T\rightarrow\infty}\frac{\ent_\varepsilon(L_T)}{T}; \qquad
\bandc_\varepsilon(L) \triangleq  \limsup_{T\rightarrow\infty}\frac{\capa_\varepsilon(L_T)}{T}. 
\]
\end{definition}
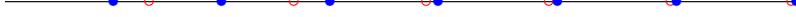
\begin{figure}[t]
%
    \centering
    \begin{tikzpicture}[scale=1.6]
		\draw (0,0) -- (.9,0) \aaa   -- (1.2,0) \bbb--(1.8,0) \aaa  -- (2.4,0)\bbb -- (2.7,0)\aaa-- (3.5,0) \bbb-- (3.6,0)\aaa-- (4.52,0) \bbb-- (4.59,0)\aaa-- (5.53,0) \bbb-- (5.58,0)\aaa-- (6.535,0) \bbb-- (6.575,0)\aaa
  ;
  \end{tikzpicture}
\caption{
A run of $\aut_6$ on $(a,.9)(b,1.2)(a,1.8)(b,2.4)(a,2.7)(b,3.5)(a,3.6)(b,4.52)\\(a,4.59)(b,5.53)(a,5.58)(b,6.535)(a,6.575)$.} \label{fig:arun}
\end{figure}

\begin{figure}[t]  
 \begin{tikzpicture}
 \node at (1.5,2) {$\aut_6':$};
 \node[state, ellipse, initial, accepting] at (3,1) (q0) {$\displaystyle\frac{s}{x=y=0}$};
  \node[state,ellipse,  accepting] at (6,1) (q1) {$\displaystyle\frac{q}{0=y<x<1}$};
\node[state, ellipse] at (11,1) (q2) {$\displaystyle\frac{p}{0=x<y<1}$};
\draw (q0) edge[->, above, bend left] node{$a,b, x<1, \{x\}$} (q2)
 (q1) edge[->, above] node{$a,b, x<1, \{x\}$} (q2)
(q2) edge[->, below, bend left] node{$a,b, 1<y<2, \{y\}$} (q1);
\end{tikzpicture}

\caption{
The region-split form of $\aut_6$, constraints within states correspond to start conditions $S(\cdot)$.
} \label{fig:meager-example}
\end{figure}
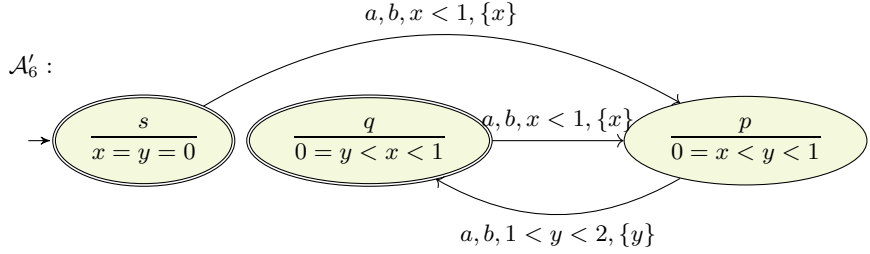

\begin{definition}[3 classes, \cite{3classes}]
A timed language $L$ is 
\begin{itemize}
    \item \emph{meager}
    whenever $\bandh_\varepsilon(L) = O(1)$;
    \item \emph{normal} whenever $\bandh_\varepsilon(L) =  \Theta(\log \frac{1}{\varepsilon})$;
    \item \emph{obese} whenever $\bandh_\varepsilon(L) =  \Theta(\frac{1}{\varepsilon})$, as $\varepsilon\to 0$.
\end{itemize}

\end{definition}
As the main result of \cite{3classes} we proved that every timed regular language belongs to one of these classes, and provided a classification algorithm for timed automata. 


\cref{fig:5examples} gives a few examples of automata in the three classes: $\aut_3$, $\aut_5$ and $\aut_6$ are meager,  $\aut_1$ and $\aut_4$ are normal and $\aut_2$ and $\aut_7$ are obese. From now on, we concentrate on meager timed automata and compute their bandwidth. 
$\aut_5$ has only discrete choices since its transitions happen at discrete dates only. $\aut_3$ can produce a huge amount of information ($2/\varepsilon$ bits every second), but only during the first five seconds of its life. Its bandwidth is $\lim_{T\to \infty} 2/\varepsilon T=0$. The automaton $\aut_6$  (serving as running example below) is much less evident. It involves two flows of events, with events of the former being $<1$ second apart and of the latter on the contrary $>1$ second apart, as shown on \cref{fig:arun}. 
Such interleaving is possible for any duration $T$, but real-valued delays between symbols become more and more constrained.

\section{Simply-timed graphs and their bandwidth}\label{sec:simply}
Following \cite{realtime,simply}, we define a class of graphs that will serve as the main abstraction for computing the bandwidth of meager timed automata.

\begin{definition}[Simply-timed graphs] A \emph{simply-timed graph} (\STA) is a tuple $\gra=(Q,\Sigma,\Delta)$ with $Q$ a finite set of states, $\Sigma$ a finite alphabet, and $\Delta\subset Q\times\rat_+\times\Sigma\times Q$ a finite transition relation. For a transition $(p,d,a,q)\in\Delta$, the state $p$ is referred to as  origin, $q$ destination, $a$ label and $d$ delay.

An \STA\ is \emph{$0$-free} if all its delays are strictly positive. 
It is \emph{deterministic} if for any two transitions $(p,d,a,q)$ and $(p,d,a,q')$ necessarily $q=q'$. 
\end{definition}

The timed language of such a graph is defined in a natural way.
\begin{definition}[Semantics of \STA] A \emph{run} of an \STA\ $(Q,\Sigma,\Delta)$ on a timed word $w= (a_1,t_1)\dots(a_n,t_n)$ is a sequence
$
q^0 \trans{d_0,a_1} q^1  \trans{d_1,a_2}\cdots q^n,
$
such that
$ t_i-t_{i-1}=d_i$, and $(q^{i-1},d_i,a_i,q^i)\in\Delta$.
The \emph{language} $L(\gra)$ of an \STA\ $\gra$ is the set of all timed words having a run.
\end{definition}
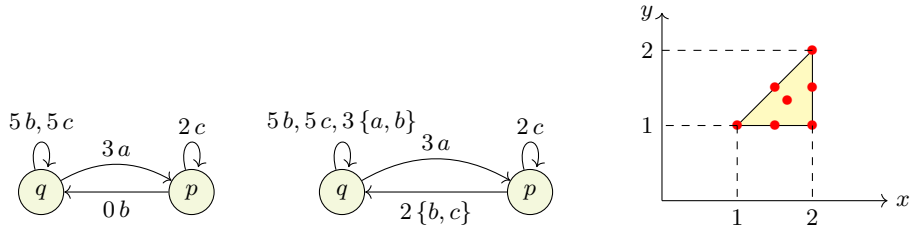
\begin{figure}[t]
\begin{tikzpicture}
\node[state] at (1,1) (q) {$q$};
\node[state] at (3,1) (p) {$p$};
\draw (q) edge[->, above,bend left] node{$3\,a$} (p)
(p) edge[->, below] node{$0\,b$} (q)
(q) edge[loop above] node{$5\,b, 5\,c$} (q)
(p) edge[loop above] node{$2\,c$} (p);
\node[state] at (5,1) (qq) {$q$};
\node[state] at (7.5,1) (pp) {$p$};
\draw (qq) edge[->, above,bend left] node{$3\,a$} (pp)
(pp) edge[->, below] node{$2\,\{b,c\}$} (qq)
(qq) edge[loop above] node{$5\,b, 5\,c, 3\,\{a,b\}$} (qq)
(pp) edge[loop above] node{$2\,c$} (pp);

\end{tikzpicture}\hfill 
\begin{tikzpicture}
\draw[fill=yellow!30] (1,1) -- (2,2) -- (2,1) --cycle;
\draw[color=red] (1,1) node {$\bullet$} (2,2) node {$\bullet$} (2,1) node {$\bullet$}
(1.5,1) node {$\bullet$} (1.5,1.5) node {$\bullet$} 
(2,1.5) node {$\bullet$} (5/3,4/3 )node {$\bullet$};

\draw[->] (0,0) -- (3,0);
\draw (3,0) node[right] {$x$};
\draw [->] (0,0) -- (0,2.5);
\draw (0,2.5) node[left] {$y$};
\draw [dashed] (1,1) -- (1,0) node[below] {$1$};
\draw [dashed] (2,1) -- (2,0) node[below] {$2$};
\draw [dashed] (1,1) -- (0,1) node[left] {$1$};
\draw [dashed] (2,2) -- (0,2) node[left] {$2$};

\end{tikzpicture}
    \caption{ A simply-timed graph (left), and its 0-free form (middle). Right: a 2-dimensional region, its 7 faces, and their barycenters.}
    \label{fig:STA}
\end{figure}

The notions of $\varepsilon$-bandwidth are simpler for \STA. 
\begin{definition}
    For a set $S$ of timed words, we call its \emph{size} and denote by $\Upsilon(S)$ the cardinality of the largest $0$-separated set in $S$.
\end{definition}
Equivalently, this is the cardinality of the smallest 0-net, and the number of equivalence classes for the binary relation $d(u,v)=0$ over $S$. We remark that whenever $S$ is 0-free, $\Upsilon(S)=\# S$.
\begin{definition}
The \emph{growth rate} of an \STA\ $\gra$ is defined as
$$
\beta(\gra)=\lim_{T\to\infty}\frac{\log \Upsilon(L_T(\gra))}{T},
$$
whenever this limit exists and is finite.
\end{definition}

\begin{proposition}\label{prop:STA}Given an \STA\ $\gra$ with non-empty language, the growth rate $\beta(\gra)$ always exists. Moreover, taking $N$ the common denominator of all delays,   for $\varepsilon<1/2N$ the growth rate coincides with both bandwidths:
$$
\bandc_\varepsilon(L(\gra))=\bandh_\varepsilon(L(\gra))=\beta(\gra).
$$
\end{proposition}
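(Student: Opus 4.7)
The plan is to first establish existence of the growth rate $\beta(\gra)$ via a spectral argument, and then show that for $\varepsilon<1/(2N)$ both the $\varepsilon$-capacitive and $\varepsilon$-entropic bandwidths coincide with $\beta(\gra)$.

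For existence of $\beta(\gra)$, I would pass to a $0$-free version of $\gra$. By \cref{eq:nu}, distinct equivalence classes of $L_T(\gra)$ under the relation $d(\cdot,\cdot)=0$ correspond (up to the negligible instant $t=0$) to distinct $0$-free words $\NoZ(w)$. Contracting maximal chains of $0$-delay transitions in $\gra$ and bundling simultaneous labels into subsets yields a $0$-free \STA\ $\gra'$ over alphabet $2^\Sigma\setminus\{\emptyset\}$ whose language realises $\NoZ(L(\gra))$ and whose delays still have common denominator dividing $N$. Since $\gra'$ is $0$-free, $\Upsilon(L_T(\gra'))=\# L_T(\gra')$, and this count can be extracted via the polynomial matrix $M(z)$ defined by $M(z)_{pq}=\sum_{(p,d,a,q)\in\Delta'}z^{dN}$: it is the sum of the coefficients of $z^0,\dots,z^{TN}$ in the entries of the rational generating function $(I-M(z))^{-1}$. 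Standard Perron--Frobenius asymptotics applied to the non-negative matrix $M(1)$, decomposed along the strongly connected components of $\gra'$ and identifying the dominant spectral radius $\lambda$, give $\#L_T(\gra')=\Theta(T^r\lambda^{TN})$ for some integer $r\geq 0$, so $\log\Upsilon(L_T(\gra))/T$ tends to a genuine limit, $N\log\lambda$, which we take as the definition of $\beta(\gra)$.

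For coincidence with the $\varepsilon$-bandwidths, fix any $\varepsilon<1/(2N)$. Every event time in $L(\gra)$ lies on the grid $\frac{1}{N}\nat$, so two $0$-free representatives of distinct equivalence classes either differ at some matched event time by at least $1/N>2\varepsilon$, or carry some unmatched letter and are thus at pseudo-distance $+\infty$. Therefore a system of representatives is an $\varepsilon$-separated subset of $L_T(\gra)$ of size $\Upsilon(L_T(\gra))$, and no $\varepsilon$-separated set can include two elements from the same class, hence $\capa_\varepsilon(L_T(\gra))=\log\Upsilon(L_T(\gra))$. For entropy, the same representatives form a $0$-net, hence a fortiori an $\varepsilon$-net, while the open $\varepsilon$-balls centred on distinct classes are pairwise disjoint, forcing any $\varepsilon$-net to contain a point near each class; thus $\ent_\varepsilon(L_T(\gra))=\log\Upsilon(L_T(\gra))$ as well. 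Dividing by $T$ and using the limit from the previous paragraph yields $\bandc_\varepsilon(L(\gra))=\bandh_\varepsilon(L(\gra))=\beta(\gra)$.

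The main obstacle will be the spectral step: showing that $\log\#L_T(\gra')/T$ converges as a true limit and not merely a $\limsup$. Contributions from subdominant strongly connected components, Jordan blocks of size greater than one attached to $\lambda$, and arithmetic periodicities imposed by the integer delays living on the $1/N$-grid can all produce small fluctuations that must be absorbed into the polynomial factor $T^r$ via a careful SCC-wise Perron--Frobenius analysis; this is the delicate point of the argument.
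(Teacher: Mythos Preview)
Your argument for the second half (coincidence of $\varepsilon$-capacity, $\varepsilon$-entropy and $\log\Upsilon$) is correct and is essentially what the paper does: once all event times sit on the grid $\frac{1}{N}\nat$, any two words in distinct $0$-classes are at pseudo-distance at least $1/N>2\varepsilon$, so a system of class representatives is simultaneously a maximum $\varepsilon$-separated set and a minimum $\varepsilon$-net.

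The existence argument, however, has a genuine gap: you skip determinization. After $0$-elimination the graph $\gra'$ is in general \emph{not} deterministic (e.g.\ if $\gra$ already contains $(q,1,a,q)$, $(q,1,a,r)$, $(r,1,a,q)$, $(r,1,a,r)$, then $\gra'=\gra$ with label $\{a\}$). For a nondeterministic $\gra'$ the entries of $(I-M(z))^{-1}$ enumerate \emph{runs}, not \emph{words}; in the example just given there are $2^{T}$ runs of duration $T$ but only one word, so the spectral radius of $M(1)$ is $2$ while the true growth rate of $\#L_T(\gra')$ is $0$. Hence your claimed asymptotic $\#L_T(\gra')=\Theta(T^{r}\lambda^{TN})$ is not justified, and with it the existence of the limit. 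The paper's route inserts an explicit subset-construction determinization between $0$-elimination and the matrix step (this preserves the timed language, hence $\Upsilon$), and only then invokes the transfer-matrix analysis of \cref{prop:char}.

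There is also a secondary slip: even for a deterministic $0$-free \STA, the growth rate per unit \emph{time} is not the spectral radius of $M(1)$ unless all delays equal $1/N$. When delays vary, $\rho(M(1))$ governs growth per \emph{edge}; the time-based rate is $-\log|z_0|$ with $z_0$ the smallest-modulus root of $\det(I-M(z))=0$, as in \cref{prop:char}. If you want a Perron--Frobenius formulation you must first subdivide every edge of delay $k/N$ into $k$ unit-delay edges and apply it to the resulting $0$--$1$ adjacency matrix; that reduction has to be made explicit.
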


We present the method for computing $\beta(\gra)$, which consists of several transformations.
First, we relate the notions of size and growth rate to 0-elimination.
\begin{proposition}\label{eq:nu:L} 
0-elimination preserves the growth rate: $\beta(L)=\beta(\NoZ(L))$.

\end{proposition}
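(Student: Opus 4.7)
The plan is to compare $\Upsilon(L_T)$ to $\Upsilon(\NoZ(L)_T)$ directly and show that the two sizes differ only by a multiplicative factor depending on $|\Sigma|$, which is absorbed once one takes $\log$ and divides by $T$. As a warm-up I would first observe that $\NoZ(L_T)=\NoZ(L)_T$ (removing time-zero events never increases the last event's time, and preserves it as long as some positive-time event remains), and that $\NoZ(L_T)$ consists of $0$-free words so, by the definition of $\Upsilon$, $\Upsilon(\NoZ(L)_T)=\#\NoZ(L)_T$.

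The core step is the two-sided sandwich
$$
\#\NoZ(L)_T \;\leq\; \Upsilon(L_T) \;\leq\; 2^{|\Sigma|}\cdot \#\NoZ(L)_T.
$$
For the lower bound I would invoke \cref{eq:nu}: if $d(w_1,w_2)=0$, then taking empty prefixes $u_i$ in the decomposition of that proposition immediately witnesses $\NoZ(w_1)=\NoZ(w_2)$, so $d$-equivalence refines $\NoZ$-equivalence and distinct elements of $\NoZ(L)_T$ come from distinct $d$-classes of $L_T$. For the upper bound I would split any $w\in L_T$ as $w=uv$ with $u$ collecting the time-zero events and $v$ the positive-time ones. Inside a fixed $\NoZ$-class, \cref{eq:nu} applied to the $v_i$ (whose time-zero prefixes are empty) forces $d(v_1,v_2)=0$; then an inspection of the definition of $d$ shows that $d(u_1v_1,u_2v_2)=0$ if and only if the sets of letters occurring in $u_1$ and $u_2$ coincide. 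Thus inside any single $\NoZ$-class there are at most $2^{|\Sigma|}$ distinct $d$-classes, namely one per subset of $\Sigma$ of letters firing at time $0$.

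Combining, $\log \#\NoZ(L)_T\leq \log \Upsilon(L_T)\leq \log \#\NoZ(L)_T+|\Sigma|$; dividing by $T$ and letting $T\to\infty$ makes the additive $|\Sigma|$ disappear, yielding $\beta(L)=\beta(\NoZ(L))$. The main subtleties I would expect are bookkeeping: handling the empty word and all-time-zero words when identifying $\NoZ(L_T)$ with $\NoZ(L)_T$, and checking that the existence of the limits on both sides is not assumed circularly. The sandwich itself resolves the latter, since it forces the two sequences $\tfrac{1}{T}\log \Upsilon(L_T)$ and $\tfrac{1}{T}\log\#\NoZ(L)_T$ to share the same $\limsup$ and $\liminf$, so either limit exists iff the other does.
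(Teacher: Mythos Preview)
Your argument is correct and follows the route the paper itself sets up: \cref{eq:nu} is stated precisely so that one can compare $d$-equivalence classes in $L_T$ with elements of $\NoZ(L)_T$, and your two-sided sandwich $\#\NoZ(L)_T \leq \Upsilon(L_T) \leq 2^{|\Sigma|}\cdot \#\NoZ(L)_T$ is exactly the intended consequence. The identification $\NoZ(L_T)=\NoZ(L)_T$, the observation that $\Upsilon$ reduces to cardinality on $0$-free sets, and the final passage to the limit are all handled correctly, including the bookkeeping about the empty word and existence of the limit.
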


Thanks to this proposition, 
instead of counting equivalence classes in $L_T$ we will count timed words in $\NoZ(L)_T$. To that aim, we will transform the \STA\ for $L$ to that for $\NoZ(L)$.

\begin{con}[0-elimination in \STA]
Given an \STA\ $\gra=(Q,\Sigma,\Delta)$,  we define $\NoZ(\gra)= (Q,2^\Sigma\setminus\{\emptyset\},\Delta')$ 
combining every timed transition with each subsequent instant transition: $(p,d,A,q)\in\Delta'$ whenever $d>0$ and there exist  $(p,d,a_0,q_0),(q_0,0,a_1,q_1),\dots, (q_{k-1},0,a_k,q_k) \in\Delta$ with  $q_k=q$,  $A=\{a_0,\dots, a_k\}$.
\end{con}
We remark that  $\NoZ(\gra)$ can be easily computed by saturation. 
\begin{proposition}
Whenever an \STA\ $\gra$  recognizes $L$, the 0-eliminated graph $\NoZ(\gra)$ recognizes the language  $\NoZ(L)$. 
\end{proposition}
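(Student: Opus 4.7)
The plan is to prove both inclusions $L(\NoZ(\gra)) \subseteq \NoZ(L)$ and $\NoZ(L) \subseteq L(\NoZ(\gra))$ by translating runs back and forth. The key observation, read straight off the construction, is that each edge $(p,d,A,q)\in\Delta'$ of $\NoZ(\gra)$ is a shorthand for a chain in $\Delta$ consisting of one positive-delay transition $(p,d,a_0,q_0)$ followed by finitely many instantaneous transitions $(q_{j-1},0,a_j,q_j)$ ending at $q$, whose labels together form the set $A$.

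For $L(\NoZ(\gra))\subseteq \NoZ(L)$, I would start from a run
\[
q^0 \trans{d_1,A_1} q^1 \trans{d_2,A_2} \cdots \trans{d_n,A_n} q^n
\]
of $\NoZ(\gra)$ on $W=(A_1,t_1)\cdots(A_n,t_n)$, expand each edge into its defining chain in $\Delta$, and concatenate. This yields a run of $\gra$ on a timed word $w$ whose distinct strictly positive event times are exactly $t_1<\cdots<t_n$, with $A_i$ the set of labels occurring at $t_i$. Hence $w\in L$, and by \cref{def:nu} we have $\NoZ(w)=W$, so $W\in \NoZ(L)$.

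For the converse, take $w=(a_1,t_1)\cdots(a_n,t_n)\in L$ witnessed by a run $q^0\trans{d_1,a_1}q^1\cdots\trans{d_n,a_n}q^n$ of $\gra$. I partition the indices into maximal blocks, each consisting of one transition with $d_i>0$ followed by the maximal string of $d_j=0$ transitions after it, together with a possible initial block made entirely of $d_j=0$ transitions. By definition of $\Delta'$, every non-initial block collapses to a single edge of $\NoZ(\gra)$ from the pre-block state to the post-block state, with delay equal to that of its first transition and label equal to the union of the labels occurring in the block. Concatenating these edges, starting from the state reached at the end of the initial all-zero prefix, yields a run of $\NoZ(\gra)$ on the $0$-free word $\NoZ(w)$, which by \cref{def:nu} records exactly the sets of events occurring at the strictly positive times.

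The one genuinely subtle point is the treatment of the initial $0$-delay transitions of $\gra$: the word-level $\NoZ$ simply discards all events at time $0$, whereas the graph-level construction forbids a non-positive first delay. These two conventions align cleanly because an \STA\ has no distinguished initial state, so a run of $\NoZ(\gra)$ may legitimately start from the state reached after an initial block of instantaneous transitions in $\gra$. Everything else is routine bookkeeping; I expect no further obstacle beyond carefully checking that the maximal block decomposition on the $\gra$-side matches the grouping of simultaneous events performed by $\NoZ$ on the word side.
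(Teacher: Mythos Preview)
The paper states this proposition without proof in the main text (any argument is deferred to the appendix that is only \verb|\input|'d), so there is nothing to compare against line by line. Your argument is correct and is exactly the natural one: unfold each $\Delta'$-edge into its defining positive-delay-then-zeros chain for $L(\NoZ(\gra))\subseteq \NoZ(L)$, and conversely cut a $\gra$-run at the positions where the delay is strictly positive to obtain the maximal blocks that become $\Delta'$-edges. Your treatment of the initial all-zero prefix is the right one and matches the paper's conventions: the \STA\ semantics has no designated initial state, so the $\NoZ(\gra)$-run may start at the state reached after that prefix, in agreement with \cref{def:nu} discarding events at time~$0$. The only bookkeeping point worth making explicit is that, because timed words have non-decreasing timestamps, all events sharing a timestamp are contiguous, so your maximal blocks do collect \emph{all} events at a given positive time and hence their label set coincides with the $A_i$ produced by $\NoZ$ on the word side.
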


    Given an \STA\ $\gra$, we compute its determinization by the usual subset construction, see e.g.~\cite[Thm 3.3.2]{marcus}.   Determinization preserves the timed language and thus the growth rate.

Given a deterministic $0$-free \STA\ $\gra$, we can compute its growth rate using the technique of generating functions. 

\begin{con}
    Given an \STA\ $\gra=(Q,\Sigma,\Delta)$ with $Q=\{q_1,\dots,q_n\}$, we define its $n\times n$ \emph{adjacency matrix} as 
    $$M_\gra(z)=(m_{ij}(z)) \mathrm{\ with\ } m_{ij}(z) =\sum_{(q_i,d,a,q_j)\in\Delta} z^d,$$
    and its \emph{characteristic quasi-polynomial} as $\psi_\gra(z)=\det(I-M_\gra(z))$.
\end{con}
\begin{proposition}\label{prop:char}
    Let $\gra$ be a  deterministic $0$-free \STA, and let $z_0$ be the smallest root (in modulus) of its \emph{characteristic equation} $\psi_\gra(z)\!=\!0$. Then $\beta(\gra)\!=\!-\!\log|z_0|$.
\end{proposition}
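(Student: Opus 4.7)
The plan is to apply the classical transfer-matrix method: express the path-counting generating function of $\gra$ as a rational function of $z^{1/N}$ whose denominator is $\psi_\gra(z)$, and read off the exponential growth rate from its dominant singularity.

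The first step is a reduction from counting $0$-separated timed words to counting paths. Since $\gra$ is $0$-free, every timed word in $L(\gra)$ is $0$-free, so $\Upsilon(L_T(\gra)) = \#L_T(\gra)$. Since $\gra$ is deterministic, the paths starting from a fixed state $q_i$ are in bijection with the timed words they produce, hence
\[
\max_{i}\#\{\pi: q_i\to\cdot,\ \length(\pi)\le T\} \;\le\; \#L_T(\gra) \;\le\; \sum_{i,j}\#\{\pi: q_i\to q_j,\ \length(\pi)\le T\}.
\]
The two bounds differ by at most a factor $\#Q$, so $\beta(\gra)$ equals the exponential growth rate of the total path count.

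Next I would set up the generating function. Define $F_{ij}(z)=\sum_{\pi: q_i\to q_j} z^{\length(\pi)}$. Splitting each path according to its first transition yields $F(z)=I+M_\gra(z)F(z)$, so $F(z)=(I-M_\gra(z))^{-1}$ as a formal series in $z^{1/N}$, where $N$ is the common denominator of all delays. By Cramer's rule, every entry of $F$, and hence the scalar generating function $\Phi(z)=\mathbf{1}^T F(z)\mathbf{1}$ that counts all paths by total duration, is rational with denominator dividing $\psi_\gra(z)$. Substituting $w=z^{1/N}$ makes $\Phi(w^N)$ an honest rational function in $w$ with non-negative polynomial matrix $M_\gra(w^N)$; the coefficient $[w^k]\Phi(w^N)$ then counts the paths of duration $k/N$. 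Standard partial-fraction analysis gives $[w^k]\Phi(w^N)\sim P(k)\,|w_0|^{-k}$ for some polynomial $P$, where $w_0$ is the smallest-modulus pole; setting $z_0=w_0^N$ yields $\beta(\gra)=-\log|z_0|$ after accounting for the time rescaling.

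The main obstacle is the final identification: the smallest-modulus pole of $\Phi$ must coincide with the smallest-modulus root of $\psi_\gra$, i.e.\ that root must not be cancelled by the numerator. I would handle this via positivity. The coefficients of $M_\gra(z)$, and therefore of $F(z)=\sum_{k\ge 0}M_\gra(z)^k$ and of $\Phi$, are all non-negative. By Pringsheim's theorem the radius of convergence of $\Phi$ is attained at a positive real singularity, which must be a zero of the denominator $\psi_\gra$. A Perron--Frobenius argument on the non-negative matrix $M_\gra(r)$ (or, in the reducible case, on the SCC decomposition of $\gra$, where the growth rate is carried by the SCC whose spectral radius first reaches $1$) identifies this positive real zero as the smallest in modulus among the roots of $\psi_\gra$, closing the argument.
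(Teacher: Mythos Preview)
Your argument is correct and follows exactly the generating-function route the paper indicates (the adjacency matrix $M_\gra(z)$ and characteristic quasi-polynomial $\psi_\gra$ are introduced precisely for this transfer-matrix computation, and the paper's proof proceeds the same way). One minor caveat: the pointwise asymptotic $[w^k]\Phi\sim P(k)\,|w_0|^{-k}$ may fail when several poles share the minimal modulus, but since you ultimately only need the exponential growth rate of the partial sums $\sum_{k\le NT}c_k$, which is governed by the radius of convergence alone (and your Pringsheim/Perron--Frobenius step correctly pins that radius to $|z_0|$), the conclusion is unaffected.
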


\begin{theorem}\label{thm:algo:STA}
  Given an \STA\  $\gra$, for $\varepsilon$ small enough, $\bandc_\varepsilon ( \gra)=\bandh_\varepsilon( \gra)=\beta(\gra)$. The growth rate  $\beta(\gra)$ is a logarithm of an algebraic number computable as a function of $\gra$.
\end{theorem}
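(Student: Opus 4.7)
The plan is to assemble the theorem from the pipeline of constructions and propositions developed in this section, reducing an arbitrary \STA\ to a deterministic 0-free one for which the characteristic-polynomial method applies.

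First, I would dispatch the bandwidth equalities. \Cref{prop:STA} already gives, for any \STA\ $\gra$ with non-empty language and $\varepsilon < 1/(2N)$ (where $N$ is the common denominator of the delays in $\gra$), the chain $\bandc_\varepsilon(L(\gra)) = \bandh_\varepsilon(L(\gra)) = \beta(\gra)$. (If $L(\gra)$ is empty, both bandwidths are trivially $0$ and we define $\beta(\gra) = 0$, so this case is immediate.) Hence for the first claim it suffices to exhibit an algorithm that outputs $\beta(\gra)$ and verify it returns the correct number.

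The computation of $\beta(\gra)$ proceeds in three stages. Step~1: apply the 0-elimination construction to obtain the 0-free \STA\ $\NoZ(\gra)$; by the proposition preceding determinization, $L(\NoZ(\gra)) = \NoZ(L(\gra))$, and by \Cref{eq:nu:L} on growth rates, $\beta(\NoZ(\gra)) = \beta(\gra)$. Step~2: apply the subset construction to $\NoZ(\gra)$ to obtain a deterministic \STA\ $\gra'$; since this is the standard determinization cited right after the construction, $L(\gra') = L(\NoZ(\gra))$ and therefore $\beta(\gra') = \beta(\NoZ(\gra))$; moreover $\gra'$ remains 0-free because all its transitions come from 0-free transitions of $\NoZ(\gra)$. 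Step~3: invoke \Cref{prop:char} on the deterministic 0-free \STA\ $\gra'$ to obtain $\beta(\gra') = -\log|z_0|$ where $z_0$ is the smallest (in modulus) root of $\psi_{\gra'}(z) = \det(I - M_{\gra'}(z))$. Chaining the three equalities yields $\beta(\gra) = -\log|z_0|$, which is computable from $\gra$.

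It remains to argue that $-\log|z_0|$ is the logarithm of an algebraic number. The entries of $M_{\gra'}(z)$ are finite sums $\sum z^d$ with $d \in \rat_+$, so $\psi_{\gra'}(z)$ is a sum of integer multiples of terms $z^r$ with $r \in \rat_+$. Let $N'$ be a common denominator of all exponents appearing in $\psi_{\gra'}$ and substitute $u = z^{1/N'}$: then $\psi_{\gra'}(u^{N'})$ becomes a genuine polynomial with integer coefficients in the variable $u$. Its roots are algebraic numbers, hence $z_0 = u_0^{N'}$ is algebraic, and so is $|z_0|$ (as the modulus of an algebraic number over $\overline{\rat}$ is algebraic). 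Therefore $\beta(\gra) = \log(1/|z_0|)$ is the logarithm of an algebraic number, computable as a function of $\gra$ by producing the matrix $M_{\gra'}(z)$, the polynomial $\psi_{\gra'}(u^{N'})$, and isolating its smallest-modulus root by standard algebraic-number algorithms.

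The main obstacle I expect is bookkeeping rather than a deep difficulty: one must check that the preservation of growth rate under determinization is genuinely immediate (it follows from $L(\gra') = L(\NoZ(\gra))$ plus the definition of $\beta$), and that the ``$\varepsilon$ small enough'' threshold from \Cref{prop:STA}, stated for $\gra$, does not have to be adjusted after the 0-elimination and determinization steps—which is fine because the bandwidth equalities are applied to the original $\gra$, and only the numerical computation of $\beta$ goes through $\gra'$. A minor subtlety is justifying that $|z_0|$ is algebraic; this follows because the modulus of an algebraic number is algebraic ($|z_0|^2 = z_0 \bar{z_0}$ with $\bar{z_0}$ a root of the conjugate polynomial), so $|z_0|$ lies in a real algebraic extension.
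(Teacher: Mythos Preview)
Your proposal is correct and follows essentially the same approach as the paper: eliminate~0, determinize, form the adjacency matrix, and solve the characteristic equation, with correctness assembled from \Cref{prop:STA}, \Cref{eq:nu:L}, the language-preservation of $\NoZ$ and of determinization, and \Cref{prop:char}. You have additionally spelled out the algebraic-number justification (via the substitution $u=z^{1/N'}$) and the empty-language corner case, which the paper leaves implicit.
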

\begin{proof}
The algorithm for computing $\beta(\gra)$ is as follows:  eliminate 0, determinize, write the adjacency matrix, and solve the characteristic equation. Its correctness follows from the above propositions. \qed
\end{proof}

As an example consider the \STA\ on \cref{fig:STA} (left). Its 0-free form is represented in the middle, it is deterministic. The characteristic matrix  and equation are
$$
M(z)=
\begin{pmatrix}
 z^3+2z^5 &    z^3\\
 z^2 & z^2
\end{pmatrix};
\  \det(I-M(z))=2 z^7 - 2 z^5 - z^3 - z^2 + 1=0, 
$$
with the smallest root $z_0\approx 0.698776$, the growth rate is $-\log_2 z_0\approx 0.517098$.

\section{Computing the bandwidth of a meager \RTA}\label{sec:meager}


Consider a closure of a $(c-1)$-dimensional clock region $\bar{r}$  in the clock space of $\aut$. It is a simplex with $c$ vertices. 
The convex hull of any subset of $k>0$ vertices of a region will be referred to as a $(k-1)$-dimensional \emph{face}. 
Note that each vertex and the whole closed region are considered as faces. 
To each such face $f$ with vertices $v_1,\dots,v_k$  we associate its \emph{barycenter} $b=\frac1k\sum_{i=1}^kv_k$, 
as illustrated on \cref{fig:STA}, right. We denote the set of all such face barycenters $\alpha(r)$.

Given a clock vector $x\in\bar{r}$, we consider the smallest face $\face(x)$ containing $x$ and denote $\alpha(x)$ the barycenter of $\face(x)$. 
Abusively, we also say that  $x$ and $\alpha(x)$ have the dimension $k-1$.

\paragraph{Barycentric abstraction.}

Our method for the computation of the bandwidth of an \RTA\ involves abstracting the 
\RTA\ by an \STA\ whose states correspond to barycenters of the faces of each start region in the \RTA. 
The transitions of the \STA\ are exactly the timed transitions of the original \RTA\ that allow a unique time delay 
and do not change face dimension. 
Formally:
\begin{con}\label{con:abstr}
    Given an \RTA\ $\aut=(Q,X,\Sigma,\Delta,S,I,F)$, its \emph{barycentric abstraction} is an \STA\ $\gra=\alpha(\aut)=(Q',\Sigma,\Delta')$ obtained as follows:
    \begin{itemize}
    \item the state space is constituted by face barycenters of all the starting regions:
    $$
    Q'=\bigcup_{q\in Q}\{q\}\times \alpha(\overline{S(q)});
    $$
    \item for each edge $\delta=(p,p',a,\guard,\reset)\in\Delta$, and for each couple of barycenters  $x\in \alpha(S(p))$ and $x'\in \alpha(S(p'))$ of the same dimension,\\
        if $L_{\bar{\delta}}((p,x),(p',x'))=\{ta\}$ for some unique $t\in\rat$, then $\Delta'$ contains a transition $((p,x),t,a,(p',x'))$.
    $\Delta'$ contains no other transitions.
    \end{itemize}
\end{con}
Note that the abstraction leaves some edges of the \RTA\ without a counterpart in the built \STA. In fact, these removed edges can only appear a bounded number of times within a run of a meager automaton, and thus are actually not needed to produce the full amount of bandwidth.

For our running example,  the region-split automaton $\aut_6'$ on \cref{fig:meager-example} the starting region is 0-dimensional for $s$, just $(0,0)$. The two other starting regions are 1-dimensional: $(0,1)\times\{0\}$ for $q$ and $\{0\}\times(0,1)$ for $p$, in each of those there are two vertices and one 1-dimensional barycenter in the middle.  The barycentric abstraction is presented on \cref{fig:B6}, top; it splits into two connected components: one on vertices (left subfigure) and one on region centers $(1/2,0)$ for $q$  and $(0,1/2)$ for $p$ (right).


%
The theorem below states that the bandwidth of the apparently smaller 
language $ L(\gra)$ is indeed the same as that of  $L(\bar\aut)$, implying that the barycentric abstraction is sufficient for computing the bandwidth of meager automata.




\begin{theorem}\label{prop:main}
    For any meager \RTA\ $\aut$ and its barycentric abstraction $\gra=\alpha(\aut)$, and for $\varepsilon$ small enough it holds that
    $$
    \bandc_\varepsilon(\aut)=\bandh_\varepsilon(\aut)=\beta(L(\gra)).
    $$
\end{theorem}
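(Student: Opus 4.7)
The plan is to sandwich both bandwidths of $\aut$ between $\beta(L(\gra))$ from below and above, then invoke \cref{thm:algo:STA} on the \STA\ $\gra$, which identifies $\bandh_\varepsilon(L(\gra))$, $\bandc_\varepsilon(L(\gra))$ and $\beta(L(\gra))$ for small enough $\varepsilon$. Combined with the generic inequality $\bandh_\varepsilon\le\bandc_\varepsilon$, this sandwich yields the triple equality. As a preliminary I would replace $\aut$ by its closure $\bar\aut$: as noted in \cref{sec:prelims}, region-splitting ensures that $\aut$ and $\bar\aut$ share the same symbolic behaviors, so all bandwidths agree up to a correction vanishing as $\varepsilon\to 0$.

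The lower bound $\bandc_\varepsilon(\aut)\ge\beta(L(\gra))$ is the easy direction. By \cref{con:abstr}, each transition of $\gra$ is witnessed by a genuine timed transition of $\bar\aut$ with the same (unique rational) delay, whence $L(\gra)\subseteq L(\bar\aut)$. Monotonicity of $\capa_\varepsilon$ under language inclusion gives $\capa_\varepsilon(L(\bar\aut)_T)\ge \capa_\varepsilon(L(\gra)_T)$; dividing by $T$, passing to $\limsup$, and invoking \cref{prop:STA} concludes the inequality.

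The hard direction is the upper bound $\bandh_\varepsilon(\aut)\le\beta(L(\gra))$, for which I would prove a \emph{tracking lemma}: for every accepting run $\rho$ of $\bar\aut$ of duration $T$ there exists a run $\rho'$ of $\gra$ of the same duration such that $d(\Word(\rho),\Word(\rho'))\le\varepsilon$. Locally, a single timed transition $(q,x)\trans{\delta,d}(q',x')$ of $\bar\aut$ is replaced by the barycentric transition $((q,\alpha(x)),t,a,(q',\alpha(x')))$ whenever $\delta$ is retained by \cref{con:abstr}; the rational delay $t$ then differs from $d$ by at most the region diameter. Globally, I would appeal to the structural characterization of meagerness from \cite{3classes}: only boundedly many transitions along any run fall outside the abstraction (namely the face-dimension--changing ones and those with non-unique delay), so after a prefix of bounded combinatorial cost the run of $\bar\aut$ tracks a run of $\gra$. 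Given the tracking lemma, any $\varepsilon$-net of $L(\gra)_T$ expands into a $(2\varepsilon)$-net of $L(\bar\aut)_T$ at the cost of a multiplicative factor independent of $T$, whence $\bandh_{2\varepsilon}(\aut)\le\beta(L(\gra))$.

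The main obstacle is the tracking lemma itself: making quantitatively precise the informal statement \emph{non-barycentric drift along runs of a meager automaton is bounded}. The core difficulty is preventing the pseudo-distance error from accumulating with $T$; for this I plan to exploit the fact that once two runs agree at barycentres, the determinism built into \cref{con:abstr} forces subsequent barycentres and their delays to match exactly, so the approximation error is injected only at the finitely many non-simple steps authorised by the meager structural hypothesis, rather than spreading along the whole run. The secondary technical point is the careful interaction with $\NoZ$-elimination of \cref{eq:nu:L}, needed because barycentric transitions may have zero delay; but this is absorbed by the same construction used to derive $\beta$ for $\gra$ in \cref{sec:simply}.
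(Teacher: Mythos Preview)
Your lower-bound step contains a small but genuine slip: the inclusion $L(\gra)\subseteq L(\bar\aut)$ is not available, because the \STA\ $\gra$ has no initial or final states (every run contributes to $L(\gra)$), whereas $L(\bar\aut)$ only contains words of \emph{accepting} runs. The paper repairs this with a pigeonhole-plus-concatenation argument: pick a node $n=(p,x)$ of $\gra$ whose loop language $L_\gra(n,n)$ already achieves the growth rate $\beta(L(\gra))$, then pre- and post-compose with fixed words $u,v$ of $\bar\aut$ connecting the initial state to $(p,x)$ and $(p,x)$ to a final state; now $u\cdot L_\gra(n,n)\cdot v\subset L(\bar\aut)$ and the bound follows. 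Your argument is easily patched along these lines.

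The substantive gap is in the upper bound. You correctly identify accumulation of delay errors as the obstacle, but your proposed cure---``once two runs agree at barycentres, the determinism of \cref{con:abstr} forces subsequent delays to match exactly''---does not apply: the run of $\bar\aut$ you are approximating typically never visits a barycentre, so there is no moment at which the two runs ``agree'' in that sense. Even along the good segment where every edge is retained by \cref{con:abstr}, the actual delay $d_i$ between $x_{i-1}$ and $x_i$ differs from the barycentric delay $t_i$, and determinism of $\gra$ says nothing about $\sum d_i$ versus $\sum t_i$. What actually controls the drift is the structural fact from \cite{3classes} that in a meager automaton the entire bandwidth is produced by cycles from a single state $(q,x)$ to itself, and that every such cycle admits a \emph{unique} timing. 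The paper exploits this directly rather than via a tracking lemma: it transports that bunch of uniquely-timed cycles to the barycentre $(q,\alpha(x))$ and counts them in $\gra$, so no run-by-run approximation and no accumulation estimate is needed. If you insist on the tracking-lemma route, the missing ingredient is precisely this unique-timing-of-cycles statement, which would let you decompose a long run into a bounded prefix, cycles with rigid timing, and a bounded suffix; your current sketch does not reach it.
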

Let us sketch the main ideas of the proof. 

The \textbf{lower bound} is not difficult. By the pigeonhole principle, there exists a  node $n=(p,x)$ of $\gra$, such $L_\gra(n,n)$ has growth rate $\beta(L(\gra))$. Let now $u$ a timed word leading in $\bar\aut$ from $I$ to $(p,x)$  and $v$  leading from $(p,x)$ to $F$. Then it is easy to see (directly from \cref{con:abstr}) that $u\cdot L_\gra(n,n)\cdot v\subset L(\bar\aut)$ and hence $\beta(L(\gra))=\bandh_0(u\cdot L_\gra(n,n)\cdot v)\leq \bandh_0(u\cdot L_\gra(n,n)\cdot v)\leq \bandh_0(L(\bar\aut)) = \bandh_0(L(\aut))$.\footnote{The latter equality is a consequence of \cite{3classes}, full version, Lem.~35.}

For the \textbf{upper bound}, as established in \cite{3classes}, the totality of the bandwidth of meager automata is produced by cycles running from a unique state $(q,x)$ (in the closure of the automaton) to itself. Moreover, each path from $(q,x)$  to itself admits a unique timing. We prove that this bunch of cycles can be approximated by a similar bunch on the barycenter  $(q,\alpha(x))$. We count elements of the latter using the \STA\ $\gra$. 

As a comment, while our reasoning could possibly still be valid for other choices of face representatives than barycenters (any interior point of a face can represent it), our choice is not arbitrary because it allows for a terser abstraction, based on a single, canonical, representative per face. Indeed, when an edge in a cycle of a \RTA\ sends a face onto another face, the barycenter of the destination face is always a successor of the barycenter of the origin face.

Our main result is immediate from  \cref{prop:main,thm:algo:STA}.
\begin{theorem}\label{thm:main}
   The bandwidth of a meager \DTA\   $\aut$, for $\varepsilon$ small enough, satisfies  $\bandc_\varepsilon ( \aut)=\bandh_\varepsilon( \aut)=\beta$ with $\beta$   a logarithm of an algebraic number computable as a function of $\aut$.
\end{theorem}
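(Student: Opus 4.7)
The plan is to chain the two previously established results, since the statement is essentially their composition. First, starting from the given meager \DTA\ $\aut$, I would invoke the folklore region-split construction mentioned after the definition of \RTA\ to obtain an equivalent \RTA\ $\aut'$ with $L(\aut')=L(\aut)$, hence identical bandwidth. A small verification is needed here, namely that being meager is preserved by this construction, which is clear since the classification in \cite{3classes} is language-based and the two automata recognize the same language.

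Next, I would apply \cref{prop:main} to $\aut'$: for $\varepsilon$ small enough, $\bandc_\varepsilon(\aut') = \bandh_\varepsilon(\aut') = \beta(L(\gra))$, where $\gra = \alpha(\aut')$ is the barycentric abstraction, an \STA\ explicitly constructible from $\aut'$ via \cref{con:abstr}. This immediately reduces the problem to computing the growth rate of an \STA.

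The final step is to invoke \cref{thm:algo:STA}, which asserts that $\beta(\gra)$ exists and equals $-\log|z_0|$, where $z_0$ is the smallest (in modulus) root of the characteristic equation $\psi_\gra(z)=0$ obtained after 0-elimination and determinization of $\gra$. Since the delays in $\gra$ come from rational coordinates of barycenters, they are rationals with a common denominator, so $\psi_\gra$ is (up to a monomial change of variable $z \mapsto z^{1/N}$) a polynomial with integer coefficients. Its roots are therefore algebraic numbers, and setting $\beta = -\log|z_0|$ finishes the proof.

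The only mildly subtle point, and where I would be most careful, is the first bullet: ensuring that passing from $\aut$ to its region-split form does not affect the conclusions about bandwidth. Beyond that, the argument is a two-line assembly, exactly as the authors indicate with their remark that the result is immediate from \cref{prop:main} and \cref{thm:algo:STA}. I would therefore write the proof as a short paragraph closing with \qed.
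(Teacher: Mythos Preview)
Your proposal is correct and follows exactly the paper's approach: the authors state the result is immediate from \cref{prop:main} and \cref{thm:algo:STA}, and you have unpacked precisely that chain, adding the (necessary) passage from \DTA\ to \RTA\ via region-splitting and the observation that barycentric delays are rational so $|z_0|$ is algebraic. There is nothing to correct.
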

We remark that the overall complexity of our algorithm is doubly exponential: one exponent is due to region-splitting and taking barycenters of all faces, the other to determinizing the \STA. This is to compare to  \PSPACE-completeness of recognizing meager \DTA \cite{3classes}.

\begin{figure}[t]
\centering
\begin{tikzpicture}\scriptsize
\node[circle, inner sep=0.5mm,fill=blue,left] (s) at (-2,0) {};
\node[black] at (-2.3,0) {$s$};
  \node[circle, inner sep=0.5mm,fill=blue] (a) at (0,0) {};
  \node[circle, inner sep=0.5mm,fill=blue]  (b) at (1,0) {};
  \node[circle,inner sep=0.5mm,fill=blue] (c) at (4,0) {};
  \node[circle,fill=blue,inner sep=0.5mm] (d) at (4,1) {};  
  \path[-,dashed,blue] 
  (a) edge node[below,black,inner sep=0.5mm]{$q$}(b) 
  (c) edge node[right,black,inner sep=0.5mm]{$p$}(d);
 \path[-{Latex},  thin,black!50!green] 
  (s) edge[bend right] node[below]{$0,a,b$} (c)
 (s) edge[bend left] node[above]{$1,a,b$}(d)
 (a) edge[bend right] node[below, near start]{$0,a,b$} (c)
 (a) edge[bend left] node[above]{$1,a,b$}(d)
 (b) edge node[above right]{$0,a,b$}(c)
 
  (c) edge[bend left]node[above]{$1,a,b$} (b)
 (d) edge node[ right]{$1,a,b$} (b)
  (d) edge node[above]{$0,a,b$} (a);

  \node (a) at (4.5,0) {};
  \node  (b) at (5.5,0) {};
  \node (c) at (7.5,0) {};
  \node (d) at (7.5,1) {};
    \node[circle, inner sep=0.5mm,fill=blue]  (ab) at (5,0) {};
  \node[circle,inner sep=0.5mm,fill=blue] (cd) at (7.5,0.5) {};
  \path[-,dashed, blue] 
    (a) edge node[below,black]{$q$}(b)
  (c) edge node[right,black]{$p$}(d);
  \path[-{Latex},  thin,black!50!green]  
  (ab) edge [bend left] node[above]{$1/2,a,b$}(cd)
  (cd) edge node[below]{$1/2,a,b$} (ab);
\end{tikzpicture}\\
\begin{tikzpicture}\scriptsize
\node[circle, inner sep=0.5mm,fill=blue]  (s) at (6,0) {};
  \node[circle, inner sep=0.5mm,fill=blue] (a) at (8.5,0) {};
  \node[circle, inner sep=0.5mm,fill=blue]  (b) at (9.5,0) {};
  \node[circle,inner sep=0.5mm,fill=blue] (c) at (11.5,0) {};
  \node[circle,fill=blue,inner sep=0.5mm] (d) at (11.5,1) {};  
 \path[-{Latex},  thin,black!50!red] 
  (s) edge[above ] node[above]{$1,a,b,ab$} (a)
 (s) edge[bend left] node[above]{$1,a,b$}(d)
 (s) edge[bend right] node[below]{$1,a,b,ab$} (c)
 (a) edge[loop above ] node[above]{$1,a,b,ab$} (c)
 (a) edge node[above]{$1,a,b$}(d)
 (a) edge[bend right] node[below, near start]{$1,a,b,ab$} (c)

 (c) edge node[above right]{$1,a,b$} (b)
(c) edge[loop below ] node[below]{$1,a,b,ab$} (c)
 (d) edge node[right]{$1,a,b$} (b)
 (d) edge node[below,rotate=90]{$1,a,b,ab$} (c);

     \node[circle, inner sep=0.5mm,fill=blue]  (ab) at (13.5,0) {};
  \node[circle,inner sep=0.5mm,fill=blue] (cd) at (16,0.5) {};

  \path[-{Latex},  thin,black!50!red]  
  (ab) edge [bend left] node[above]{$1/2,a,b$}(cd)
  (cd) edge node[below]{$1/2,a,b$} (ab);

\end{tikzpicture}
\caption{Top: barycentric abstraction $\gra=\alpha(\aut_6')$ split in  components of dimension 0 and 1. 
Bottom: the same graph after 0-elimination.
Multiple labels on an edge mean that there is an edge for each label, all with the same timing.}\label{fig:B6}
\end{figure}
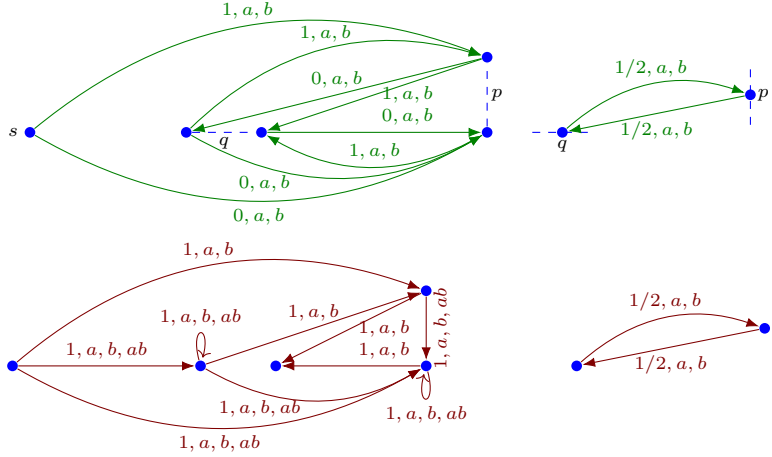

For the running example $\aut_6$, we proceed with 0-elimination for the \STA\ on \cref{fig:B6} (top), the result is given on  \cref{fig:B6} (bottom). The graph obtained is deterministic. The adjacency 
 matrix of the 0-free \STA\ obtained splits into two diagonal blocks (corresponding to connected components of barycenters of dimensions 0 and 1):
$$
\begin{pmatrix}
0 & 3z &0 & 3z& 2z\\
0 & 3z &0 & 3z& 2z\\
0 & 0 & 0 & 0 &0 \\
0 & 0 & 2z & 3z &0\\
0 & 0 &2z &3z &0
\end{pmatrix};
\quad
\begin{pmatrix}
0 & 2z^{1/2}\\
2z^{1/2} & 0
\end{pmatrix}.
$$
The characteristic equation also splits:
$$
(1 - 3 z)^2(1-4z)=0.
$$
The smallest root (corresponding to dimension 1 barycenters) is $1/4$, the bandwidth is $-\log 1/4=2$.

\section{Conclusions}
We have solved the problem of computing the bandwidth for simply-timed graphs and based on that, for all meager automata.
Our main tool is the barycentric abstraction, a new construction that can be seen as a refinement of the corner point automaton from \cite{stayingalive} (which, in turn, refines the region automaton \cite{AD}). 

The next step is to build appropriate abstractions for computing the bandwidth of normal and obese automata, using the tools from \cite{3classes}. Synthesis of timed codes and their applications would be the outcome of this research program.
\input{applight}
\bibliographystyle{splncs04}
\bibliography{entro}
\end{document}